\documentclass[letterpaper, 10 pt, conference]{ieeeconf}  

\IEEEoverridecommandlockouts                              
\usepackage{cite}
\usepackage{amsmath,amssymb,amsfonts}
\usepackage{bm}
\usepackage{graphicx}
\usepackage{algorithm}
\usepackage{algpseudocode} 
\usepackage{hyperref}
\usepackage{textcomp}
\usepackage{xcolor}
\usepackage{soul}
\let\labelindent\relax
\usepackage{enumitem}

\usepackage{verbatim}
\usepackage{amsthm}
\usepackage[english]{babel}

\newtheorem{proposition}{Proposition}
\allowdisplaybreaks

\makeatletter
\g@addto@macro\normalsize{%
\setlength\abovedisplayskip{0.2pt}%
\setlength\belowdisplayskip{0.2pt}%
\setlength\abovedisplayshortskip{0.2pt}%
\setlength\belowdisplayshortskip{0.2pt}%
\setlength\belowcaptionskip{0pt}%
\setlength\abovecaptionskip{0pt}%
}
\makeatother

\title{\LARGE \bf
CAR-EnKF: A Covariance-Adaptive and Recalibrated Ensemble Kalman Filter Framework
}

\author{Shida Jiang, Shengyu Tao, Zihe Liu, and Scott Moura
\thanks{This work was not supported by any organization.}
\thanks{Shida Jiang, Zihe Liu and Scott Moura are with the Department of Civil and Environmental Engineering,
        University of California, Berkeley, CA 94720, USA. 
        {\tt\small shida\_jiang@berkeley.edu, zihe\_liu@berkeley.edu, smoura@berkeley.edu}}
\thanks{Shengyu Tao is with the Department of Electrical Engineering, Chalmers University of Technology, Gothenburg, 41296 Sweden. {\tt\small shengyu.tao@chalmers.se}}
}

\begin{document}

\maketitle
\thispagestyle{empty}
\pagestyle{empty}

\begin{abstract}
The ensemble Kalman filter (EnKF) is widely used for nonlinear and high-dimensional state estimation because it replaces explicit covariance propagation with ensemble statistics. However, conventional EnKFs can become overconfident under nonlinear measurements, while covariance inflation is only an indirect remedy. This paper proposes a covariance-adaptive and recalibrated EnKF (CAR-EnKF) that combines (i) recalibration of the covariance effect of the selected Kalman gain after the ensemble-mean update and (ii) a positive semidefinite covariance compensation driven by measurement nonlinearity. A normalized-innovation-squared feedback law adapts the compensation magnitude online. The recalibration and conditional back-out mechanisms are inherited from our prior nonlinear Kalman filter framework. This paper develops covariance-matching realizations for the stochastic EnKF and Ensemble Transform Kalman Filter (ETKF), together with EnKF-specific covariance compensation and NIS-based adaptation. The resulting corrections recover the conventional update for linear measurements. Experiments on feature-based SLAM and Lorenz--96 show lower RMSE than conventional EnKF baselines, especially at low measurement noise. The related codes are available at \href{https://github.com/Shida-Jiang/CAR-EnKF-A-Covariance-Adaptive-and-Recalibrated-Ensemble-Kalman-Filter-Framework}{https://github.com/Shida-Jiang/CAR-EnKF-A-Covariance-Adaptive-and-Recalibrated-Ensemble-Kalman-Filter-Framework}.

\end{abstract}
\section{INTRODUCTION}

Nonlinear state estimation is a fundamental problem in control, robotics, and data assimilation. Classical nonlinear Kalman-filter variants such as the extended Kalman filter (EKF) and the unscented Kalman filter (UKF) are effective for many low- and moderate-dimensional problems. However, their applicability becomes increasingly limited as the state dimension grows, since they depend on either local linearization or the propagation of a number of sigma points that scales with the state dimension \cite{julier}. In contrast, the ensemble Kalman filter (EnKF) replaces explicit covariance propagation with Monte Carlo ensemble statistics where the ensemble members can be fewer than the state dimension, making it especially attractive for high-dimensional systems and large-scale nonlinear estimation problems \cite{Evensen1994, Evensen2003, HoutekamerZhang2016}. Specifically, the EnKF has become a workhorse estimation method for numerical weather prediction, oceanographic and climate modeling, and geophysical systems \cite{HoutekamerZhang2016}.

While the first EnKF variant, the stochastic EnKF \cite{Burgers1998}, is simple and widely used, its perturbed-observation analysis introduces additional sampling noise for finite ensembles. This limitation has motivated the development of many new EnKF variants over the years. Deterministic square-root variants, including the ensemble square-root filter (EnSRF) \cite{WhitakerHamill2002}, the ensemble transform Kalman filter (ETKF) \cite{Bishop2001}, and the local ensemble transform Kalman filter (LETKF) \cite{Hunt2007} reduce this effect while retaining the computational appeal that makes EnKF methods standard tools for nonlinear and high-dimensional estimation \cite{HoutekamerZhang2016, Sakov2008}.

Despite their success, conventional EnKFs often exhibit covariance underestimation in problems with nonlinear measurements and limited ensemble sizes. In practice, covariance inflation is often introduced to improve robustness, but it is usually applied as a generic empirical correction that does not account for the local nonlinearity of the measurement operator \cite{Anderson2007, Luo2013, ElGharamti2018}. This is particularly unsatisfactory because, when the measurement operator is highly nonlinear, the conventional EnKF update can become systematically overconfident. In contrast, when the measurements are locally close to linear, this issue becomes much less pronounced.

In this paper, we propose a covariance-adaptive and recalibrated ensemble Kalman filter (CAR-EnKF) framework for nonlinear state estimation. It combines (i) recalibration of the covariance effect of a chosen gain and conditional back-out, inherited from our prior nonlinear-Kalman-filter framework \cite{frame}, with (ii) a positive semidefinite covariance compensation driven by measurement nonlinearity. The framework is applicable to multiple EnKF variants through variant-specific anomaly updates and reduces to the conventional, non-inflated update for linear measurements. We specialize it to the stochastic EnKF and the ETKF and compare each realization with its conventional fixed-inflation counterpart. The main contributions are as follows.
\begin{enumerate}
    \item We translate the prior recalibration and back-out principle into ensemble form and derive stochastic-EnKF and ETKF anomaly updates that match the recalibrated covariance target in expectation and exactly, respectively.
    \item Inspired by the covariance-compensation principles for other nonlinear KFs \cite{compensate}, we develop an EnKF-specific, nonlinearity-dependent compensation and an optional normalized-innovation-squared adaptation law. The compensation is positive semidefinite and vanishes for linear measurements.
    \item We evaluate the two realizations on feature-based SLAM and Lorenz--96, where they reduce state RMSE by up to several orders of magnitude relative to the tested conventional fixed-inflation baselines across a range of measurement-noise levels.
\end{enumerate}

The paper is organized as follows. Section \ref{sec:method} details the conventional and proposed EnKF frameworks. Section \ref{sec:exp} presents numerical results. Section \ref{sec:conclusion} concludes the paper.

\section{METHOD}
\label{sec:method}

We consider the discrete-time nonlinear system
\begin{align}
    \bm{x}_k &= f_{k-1}(\bm{x}_{k-1}, \bm{u}_{k-1}) + \bm{w}_{k-1}, \\
    \bm{z}_k &= h_k(\bm{x}_k) + \bm{v}_k,
\end{align}
where $\bm{x}_k \in \mathbb{R}^n$ is the state, $\bm{z}_k \in \mathbb{R}^m$ is the measurement, $\bm{w}_{k-1}$ is the process noise, and $\bm{v}_k$ is the measurement noise. Let the ensemble size be $N$.

\subsection{Conventional EnKF framework}
\label{subsec:conventional_enkf}

At time step $k-1$, suppose the analysis (posterior) ensemble is
\begin{equation}
    X_{k-1}^a
    =
    \begin{bmatrix}
        \bm{x}_{k-1,1}^a & \cdots & \bm{x}_{k-1,N}^a
    \end{bmatrix}.
\end{equation}
Each ensemble member is propagated through the dynamics to form the
pre-inflation forecast ensemble
\begin{equation}
    \bm{x}_{k,i}^{f_0}
    =
    f_{k-1}(\bm{x}_{k-1,i}^a,\bm{u}_{k-1}) + \bm{w}_{k-1,i},
    \quad i=1,\dots,N,
\end{equation}
where \(\bm{w}_{k-1,i}\) denotes the process-noise realization associated with the 
$i$th ensemble member, independently drawn from $\mathcal{N}(\bm{0},Q_{k-1})$. Denote the forecast mean and anomaly matrix by
\begin{align}
    \bar{\bm{x}}_k^f = \frac{1}{N}\sum_{i=1}^N \bm{x}_{k,i}^{f_0}, \quad
    A_k^{f_0} = X_k^{f_0} - \bar{\bm{x}}_k^{f} \bm{1}^\top,
\end{align}
where
\begin{equation}
    X_k^{f_0}
    =
    \begin{bmatrix}
        \bm{x}_{k,1}^{f_0} & \cdots & \bm{x}_{k,N}^{f_0}
    \end{bmatrix}.
\end{equation}
In conventional EnKF implementations, multiplicative covariance inflation is often applied to the forecast anomalies:
\begin{equation}\label{inflation_eq}
    A_k^f = \sqrt{\rho}\, A_k^{f_0},
    \quad \rho \ge 1.
\end{equation}
After covariance inflation, the forecast ensemble becomes
\begin{equation}
    X_k^{f}
    =
    A_k^{f} + \bar{\bm{x}}_k^{f} \bm{1}^\top,
\end{equation}
and the sample forecast covariance is
\begin{equation}\label{predict}
    P_k^f = \frac{1}{N-1} A_k^f (A_k^f)^\top.
\end{equation}

Each forecast member is then mapped to the measurement space:
\begin{equation}
    \bm{z}_{k,i}^f = h_k(\bm{x}_{k,i}^f), \quad i=1,\dots,N.
\end{equation}
Let $Z_k^f$ denote the raw forecast-measurement ensemble and $\mathcal{Z}_k^f$ its centered anomaly matrix:
\begin{align}
    Z_k^f &=
    \begin{bmatrix}
        \bm{z}_{k,1}^f & \cdots & \bm{z}_{k,N}^f
    \end{bmatrix}, \\
    \bar{\bm{z}}_k^f &= \frac{1}{N}\sum_{i=1}^N \bm{z}_{k,i}^f, \quad
    \mathcal{Z}_k^f = Z_k^f - \bar{\bm{z}}_k^f \bm{1}^\top.
\end{align}
The forecast state--measurement cross-covariance and forecast measurement covariance are estimated by
\begin{align}
    P_{xz,k}^f = \frac{1}{N-1} A_k^f (\mathcal{Z}_k^f)^\top, \quad
    P_{zz,k}^f = \frac{1}{N-1} \mathcal{Z}_k^f (\mathcal{Z}_k^f)^\top.
\end{align}
The innovation covariance and Kalman gain are
\begin{equation}
    S_k^f = P_{zz,k}^f + R_k,
\end{equation}
\begin{equation}\label{gain}
    K_k = P_{xz,k}^f (S_k^f)^{-1}.
\end{equation}

The final step of an EnKF is to update the ensemble members according to the Kalman gain $K_k$ so that the updated ensemble mean ($\bar{\bm{x}}_k^a$) and covariance ($P_k^a$) (approximately) satisfy
\begin{equation}\label{hope1}
    \bar{\bm{x}}_k^a
    =
    \bar{\bm{x}}_k^f
    +
    K_k \bigl(\bm{z}_k - \bar{\bm{z}}_k^f\bigr).
\end{equation}
\begin{equation}\label{hope2}
    P_k^a
    =
    P_k^f
    -
    K_k S_k^fK_k^\top.
\end{equation}
To achieve this goal, different EnKFs update the ensemble members in different ways. Below, we discuss two representative variants: stochastic EnKF and ETKF.
\subsubsection{Conventional stochastic EnKF}\label{subsubsec:stoch_enkf}
The stochastic EnKF follows the perturbed-observation formulation in \cite{Burgers1998}. After computing $K_k$, one draws independent measurement perturbations
\begin{equation}
    \bm{\eta}_{k,i} \sim \mathcal{N}(\bm{0}, R_k), \quad i=1,\dots,N,
\end{equation}
and updates each member by
\begin{equation}
    \bm{\widetilde{x}}_{k,i}^a
    =
    \bm{x}_{k,i}^f
    +
    K_k \bigl(\bm{z}_k + \bm{\eta}_{k,i} - \bm{z}_{k,i}^f \bigr).
    \label{eq:stoch_update}
\end{equation}
Hence, the stochastic EnKF uses a memberwise random analysis update.
Define the analysis sample mean and sample covariance by
\begin{equation}
    \bar{\widetilde{\bm{x}}}_k^a
    =
    \frac{1}{N}\sum_{i=1}^N \widetilde{\bm{x}}_{k,i}^a, \quad
    \widetilde{P}_k^a
    =
    \frac{1}{N-1}\widetilde{A}_k^a(\widetilde{A}_k^a)^\top,
\end{equation}
where $\widetilde{A}_k^a$ is the anomaly matrix of the updated ensemble.
Then, one can show that
\begin{equation}
    \mathbb{E}[\bar{\widetilde{\bm{x}}}_k^a]
    =
    \bar{\bm{x}}_k^f
    +
    K_k \bigl(\bm{z}_k - \bar{\bm{z}}_k^f\bigr)
    =
    \bar{\bm{x}}_k^a,
\end{equation}
and
\begin{equation}
    \mathbb{E}[\widetilde{P}_k^a]
    =
    P_k^f
    -
    K_k S_k^f K_k^\top
    =
    P_k^a,
\end{equation}
where the expectation is taken with respect to
$\bm{\eta}_{k,1},\ldots,\bm{\eta}_{k,N}$.
Thus, for stochastic EnKF, \eqref{hope1} and \eqref{hope2} are approximately satisfied in an expectation sense.

\subsubsection{Conventional ETKF}
\label{subsubsec:etkf}
The ETKF follows a deterministic transform formulation \cite{Bishop2001}. The ensemble mean is directly updated by \eqref{hope1}, while the anomalies are updated through
\begin{equation}\label{Told}
    A_k^a = A_k^f T_k,
\end{equation}
where $T_k \in \mathbb{R}^{N\times N}$ is chosen so that the sample covariance matches the Kalman posterior covariance shown in \eqref{hope2} \cite{wang2003comparison}. One convenient symmetric choice is
\begin{equation}\label{conventional}
    T_k
    =
    \Big(I
    -
    (\mathcal{Z}_k^f)^\top
    \Bigl(
    \mathcal{Z}_k^f(\mathcal{Z}_k^f)^\top + (N-1)R_k
    \Bigr)^{-1}
    \mathcal{Z}_k^f \Big)^{1/2}.
\end{equation}
The final analysis ensemble is
\begin{equation}
    X_k^a = \bar{\bm{x}}_k^a \bm{1}^\top + A_k^a.
\end{equation}
\subsection{Improvement 1: Recalibration and Conditional Back-out}
\label{subsec:improvement1}
The recalibration target and trace-based back-out rule are inherited from our prior nonlinear-Kalman-filter framework \cite{frame}, whose Theorem 1 establishes the covariance-underestimation result under its stated approximation assumptions. For a fixed gain, posterior covariance has the generalized Joseph form $P^f+KSK^\top-KP_{zx}-P_{xz}K^\top$. The conventional reduction \eqref{hope2} can be optimistic when approximate moments selected before a nonlinear mean shift no longer describe the updated region. Recalibration therefore recomputes the measurement-space moments around the updated mean. The new contribution here is to construct stochastic-EnKF and ETKF ensembles whose sample covariances match this target in expectation and exactly, respectively.

If the recalibrated posterior covariance has a larger trace than the forecast covariance, the conditional back-out step reverts to the forecast. Consequently, the trace of the covariance used by the filter does not increase during the measurement update. Algorithm~\ref{new_algorithm} summarizes the inherited framework.

\begin{algorithm}[htbp]
\footnotesize
\caption{The nonlinear Kalman filter framework with recalibration and conditional back-out proposed in \cite{frame}}\label{new_algorithm}
\begin{algorithmic}[1]
    \Statex \textbf{Input:} Process-noise covariance $Q_{k-1}$, measurement-noise covariance $R_k$, state-transition function $f_{k-1}$, measurement function $h_k$, input $\bm{u}_{k-1}$, measurement $\bm{z}_k$

    \Statex \textbf{Initialization:}
    \State Initialize the analysis mean $\bar{\bm{x}}_0^a = \mathbb{E}[\bm{x}_0]$
    \State Initialize the analysis covariance $P_0^a = \operatorname{Cov}(\bm{x}_0)$

    \For{$k=1,2,\dots$}
        \Statex \hspace{1.2em} \textbf{Predict:}
        \State Compute the forecast mean $\bar{\bm{x}}_k^f$ and forecast covariance $P_k^f$
        \Statex \hspace{1.2em} from $\bar{\bm{x}}_{k-1}^a$, $P_{k-1}^a$, $f_{k-1}$, $\bm{u}_{k-1}$, and $Q_{k-1}$
        \Statex \hspace{1.2em} \textbf{Update:}
        \State Compute the predicted measurement mean $\bar{\bm{z}}_k^f$, the forecast 
        \Statex \hspace{1.2em} state--measurement cross-covariance $P_{xz,k}^f$, and the 
        \Statex \hspace{1.2em} forecast measurement covariance $P_{zz,k}^f$ from $\bar{\bm{x}}_k^f$, $P_k^f$, and $h_k$
        \State Set $S_k^f = P_{zz,k}^f + R_k$
        \State Compute the Kalman gain $K_k = P_{xz,k}^f (S_k^f)^{-1}$
        \State Update the analysis mean $\bar{\bm{x}}_k^a=\bar{\bm{x}}_k^f + K_k(\bm{z}_k - \bar{\bm{z}}_k^f)$

        \Statex \hspace{1.2em} \textbf{Recalibrate:}
        \State Compute the recalibrated cross-covariance $P_{xz,k}^{rc}$ and recalibrated 
        \Statex \hspace{1.2em} measurement covariance $P_{zz,k}^{rc}$ from $\bar{\bm{x}}_k^a$, $P_k^f$, and $h_k$
        \State Set $S_k^{rc} = P_{zz,k}^{rc} + R_k$
        \State Form the recalibrated posterior covariance
        \Statex \hspace{1.2em} $P_k^{a,rc}
            =
            P_k^f
            +
            K_k S_k^{rc} K_k^\top
            -
            P_{xz,k}^{rc} K_k^\top
            -
            K_k (P_{xz,k}^{rc})^\top$
        \Statex \hspace{1.2em} \textbf{Back out:}
        \If{$\operatorname{tr}(P_k^{a,rc}) > \operatorname{tr}(P_k^f)$}
            \State Set $\bar{\bm{x}}_k^a = \bar{\bm{x}}_k^f$
            \State Set $P_k^a = P_k^f$
        \Else
            \State Set $P_k^a = P_k^{a,rc}$
        \EndIf
    \EndFor
\end{algorithmic}
\normalsize
\end{algorithm}

Following Algorithm~\ref{new_algorithm}, an EnKF requires an ensemble-based recalibrated covariance evaluation and a variant-specific anomaly update, which are developed below for the stochastic EnKF and ETKF.

After the Kalman gain is computed using \eqref{gain} and the ensemble mean is updated using \eqref{hope1}, we can form the recentered (recalibrated) ensemble
\begin{equation}
    X_k^{rc}
    =
    \begin{bmatrix}
        \bm{x}_{k,1}^{rc} & \cdots & \bm{x}_{k,N}^{rc}
    \end{bmatrix}
    =
    \bar{\bm{x}}_k^a \bm{1}^\top + A_k^f.
\end{equation}
Let the corresponding recalibrated predicted measurements be
\begin{equation}
    \bm{z}_{k,i}^{rc} = h_k(\bm{x}_{k,i}^{rc}),
    \quad i=1,\dots,N,
\end{equation}
and define
\small\begin{align}
    \bar{\bm{z}}_k^{rc} = \frac{1}{N}\sum_{i=1}^N \bm{z}_{k,i}^{rc}, \quad
    \mathcal{Z}_k^{rc}
    =
    \begin{bmatrix}
        \bm{z}_{k,1}^{rc} & \cdots & \bm{z}_{k,N}^{rc}
    \end{bmatrix}
    - \bar{\bm{z}}_k^{rc}\bm{1}^\top.
\end{align}\normalsize
The cross-covariance can then be recalibrated as
\begin{equation}\label{cross}
    P_{xz,k}^{rc} = \frac{1}{N-1} A_k^f (\mathcal{Z}_k^{rc})^\top.
\end{equation}
Similarly, the innovation covariance can be recalibrated as
\begin{equation}\label{ETKF_target_new}
    S_k^{rc}
    =
    \frac{1}{N-1}\mathcal{Z}_k^{rc}(\mathcal{Z}_k^{rc})^\top
    +
    R_k.
\end{equation}
This completes steps 9 and 10 in Algorithm \ref{new_algorithm}. Then, according to step 11, the resulting posterior covariance target is
\begin{equation}\label{hope3}
    P_k^{a,rc}
    =
    P_k^f
    +
    K_k S_k^{rc} K_k^\top
    -
    K_k (P_{xz,k}^{rc})^\top
    -
    P_{xz,k}^{rc} K_k^\top.
\end{equation}
Note that \eqref{hope3} reduces exactly to \eqref{hope2} when the measurement model is linear. Therefore, the proposed improvement is specifically intended to address nonlinear measurements. To enforce \eqref{hope3}, the anomaly update rule must be modified for different EnKF variants. Again, we discuss two representative variants: stochastic EnKF and ETKF.

\subsubsection{Improvement 1 for stochastic EnKF}

The key idea of the stochastic EnKF is to perform member-wise updates with independently perturbed observations. Following this idea, we independently draw perturbations
\begin{equation}
    \bm{\eta}_{k,i}\sim\mathcal{N}(\bm{0},R_k), \quad i=1,\dots,N,
\end{equation}
and define the temporary updated ensemble by
\begin{equation}
    \bm{x}_{k,i}^{\star}
    =
    \bm{x}_{k,i}^{rc}
    +
    K_k
    \bigl(
        \bm{z}_k + \bm{\eta}_{k,i} - \bm{z}_{k,i}^{rc}
    \bigr).
    \label{eq:i1_stoch_temp}
\end{equation}
Let
\begin{equation}\label{Xk}
    X_k^\star
    =
    \begin{bmatrix}
        \bm{x}_{k,1}^\star & \cdots & \bm{x}_{k,N}^\star
    \end{bmatrix},
\end{equation}
with sample mean
\begin{equation}
    \bar{\bm{x}}_k^\star
    =
    \frac{1}{N}\sum_{i=1}^N \bm{x}_{k,i}^\star,
\end{equation}
and anomaly matrix
\begin{equation}
    A_k^\star
    =
    X_k^\star - \bar{\bm{x}}_k^\star \bm{1}^\top.
\end{equation}
Because the finite-ensemble sample mean of \eqref{eq:i1_stoch_temp} is not exactly $\bar{\bm{x}}_k^a$, we translate the temporary ensemble once more and define the candidate analysis ensemble
\begin{equation}\label{translate}
    \widetilde{X}_k^a
    =
    \bar{\bm{x}}_k^a \bm{1}^\top + A_k^\star.
\end{equation}
Hence, the realized sample covariance of the candidate analysis ensemble is
\begin{equation}\label{eqn_EnKF}
    \widetilde{P}_k^{a,rc}
    =
    \frac{1}{N-1} A_k^\star (A_k^\star)^\top.
\end{equation}
Following Algorithm~\ref{new_algorithm}, one may accept the update only if
\begin{equation}
    \operatorname{tr}(P_k^{a,rc})
    \le
    \operatorname{tr}(P_k^f),
\end{equation}
in which case $X_k^a=\widetilde{X}_k^a$; otherwise, the filter backs out and sets
\begin{equation}
    X_k^a = X_k^f.
\end{equation}
This update rule guarantees that \eqref{hope3} is satisfied in the expectation sense, as stated in the following proposition.

\begin{proposition}\label{prop:i1_stoch}
Assume the recalibrated stochastic-EnKF update is accepted. Then the candidate
analysis ensemble $\widetilde{X}_k^a$ satisfies
\begin{equation}\label{proof1}
    \frac{1}{N}\widetilde{X}_k^a \bm{1}
    =
    \bar{\bm{x}}_k^a
\end{equation}
exactly, and its sample covariance satisfies
\begin{equation}\label{eq:cov_expectation}
    \mathbb{E}\!\left[\widetilde{P}_k^{a,rc}\right]
    =
    P_k^{a,rc},
\end{equation}
where $\widetilde{P}_k^{a,rc}$ is defined in \eqref{eqn_EnKF}, and the expectation is taken with respect to $\bm{\eta}_{k,1},\ldots,\bm{\eta}_{k,N}$.
\end{proposition}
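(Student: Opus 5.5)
The plan is to write the anomaly matrix $A_k^\star$ in closed form as a deterministic part plus a part that is linear in the perturbations. Both claims then follow: the mean claim from the fact that anomalies sum to zero, and the covariance claim from taking expectations term by term.

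\textbf{Setup.} Let $\Pi = I - \frac{1}{N}\bm{1}\bm{1}^\top$ be the centering matrix, so that $\Pi\bm{1}=\bm{0}$, $\Pi^\top\Pi=\Pi$ and $\operatorname{tr}(\Pi)=N-1$. Collect the perturbations into $E_k=\begin{bmatrix}\bm{\eta}_{k,1}&\cdots&\bm{\eta}_{k,N}\end{bmatrix}$.

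By \eqref{eq:i1_stoch_temp},
\[
X_k^\star = \bar{\bm{x}}_k^a\bm{1}^\top + A_k^f + K_k\bigl(\bm{z}_k\bm{1}^\top + E_k - Z_k^{rc}\bigr),
\]
where $Z_k^{rc}$ is the raw matrix of the $\bm{z}_{k,i}^{rc}$. Right-multiplying by $\Pi$ removes every rank-one column-constant term, namely $\bar{\bm{x}}_k^a\bm{1}^\top$ and $K_k\bm{z}_k\bm{1}^\top$. It also leaves $A_k^f$ unchanged, since $A_k^f\bm{1}=\bm{0}$, and turns $Z_k^{rc}$ into $\mathcal{Z}_k^{rc}$. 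Hence
\[
A_k^\star = A_k^f - K_k\mathcal{Z}_k^{rc} + K_kE_k\Pi .
\]

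\textbf{Mean.} Since $A_k^\star\bm{1}=\bm{0}$, the translation \eqref{translate} immediately gives \eqref{proof1}.

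\textbf{Covariance.} Write $D_k = A_k^f - K_k\mathcal{Z}_k^{rc}$. The key structural point is that $K_k$, $A_k^f$ and $\mathcal{Z}_k^{rc}$ do not depend on the perturbations. Indeed, $K_k$ is computed from the forecast ensemble via \eqref{gain}, $\bar{\bm{x}}_k^a$ comes from \eqref{hope1}, and $\mathcal{Z}_k^{rc}$ is built from $\bar{\bm{x}}_k^a$ and $A_k^f$; the $\bm{\eta}_{k,i}$ are drawn afterwards and independently. Then
\[
(N-1)\widetilde{P}_k^{a,rc} = D_kD_k^\top + D_k\Pi E_k^\top K_k^\top + K_kE_k\Pi D_k^\top + K_kE_k\Pi E_k^\top K_k^\top .
\]
The two cross terms are linear in $E_k$ with deterministic coefficients, so they vanish in expectation because $\mathbb{E}[E_k]=0$. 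For the quadratic term I will compute entrywise, using independence of the columns:
\[
\mathbb{E}[E_k\Pi E_k^\top] = \sum_{i,j}\Pi_{ij}\,\mathbb{E}[\bm{\eta}_{k,i}\bm{\eta}_{k,j}^\top] = \operatorname{tr}(\Pi)R_k = (N-1)R_k .
\]

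\textbf{Matching the target.} Expanding $D_kD_k^\top/(N-1)$ with \eqref{predict}, \eqref{cross} and \eqref{ETKF_target_new} gives
\[
P_k^f - P_{xz,k}^{rc}K_k^\top - K_k(P_{xz,k}^{rc})^\top + K_k\Bigl(S_k^{rc}-R_k\Bigr)K_k^\top .
\]
Adding the $K_kR_kK_k^\top$ from the quadratic term yields exactly \eqref{hope3}.

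The only step needing care is justifying that the gain and the recalibrated quantities are independent of $E_k$. This holds because they are computed before the perturbations are drawn, and the acceptance test $\operatorname{tr}(P_k^{a,rc})\le\operatorname{tr}(P_k^f)$ is likewise $E_k$-independent, so conditioning on acceptance does not bias the expectation. The remaining obstacle is the bookkeeping showing that centering the perturbations costs exactly the factor $N-1$, which is what makes the $\frac{1}{N-1}$ normalization unbiased.
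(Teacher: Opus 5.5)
Your proposal is correct and follows the paper's proof essentially step for step: your $E_k\Pi$ is exactly the paper's centered perturbation matrix $\mathcal{E}_k$, giving the same decomposition $A_k^\star = A_k^f - K_k\mathcal{Z}_k^{rc} + K_k\mathcal{E}_k$, after which the cross terms vanish in expectation and the quadratic term contributes $K_kR_kK_k^\top$. Your explicit computation $\mathbb{E}[E_k\Pi E_k^\top]=\operatorname{tr}(\Pi)R_k=(N-1)R_k$ and your remark that the acceptance test does not depend on the perturbations supply details that the paper states without derivation.
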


\begin{proof}
Equation \eqref{proof1} follows immediately from \eqref{translate}, since
\begin{equation}
    \frac{1}{N}\widetilde{X}_k^a\bm{1}
    =
    \frac{1}{N}
    \left(
        \bar{\bm{x}}_k^a \bm{1}^\top + A_k^\star
    \right)\bm{1}
    =
    \bar{\bm{x}}_k^a
    +
    \frac{1}{N}A_k^\star \bm{1}
    =
    \bar{\bm{x}}_k^a,
\end{equation}
where the last equality holds because $A_k^\star$ is an anomaly matrix, and therefore $A_k^\star \bm{1}=\bm{0}$.

Next, define the centered perturbation matrix
\begin{equation}
    \mathcal{E}_k
    =
    \begin{bmatrix}
        \bm{\eta}_{k,1} & \cdots & \bm{\eta}_{k,N}
    \end{bmatrix}
    -
    \bar{\bm{\eta}}_k \bm{1}^\top,
\end{equation}
where
\begin{equation}
    \bar{\bm{\eta}}_k
    =
    \frac{1}{N}\sum_{i=1}^N \bm{\eta}_{k,i}.
\end{equation}
Since $\bm{x}_{k,i}^{rc} - \bar{\bm{x}}_k^a$ is the $i$th column of $A_k^f$, and $\bm{z}_{k,i}^{rc} - \bar{\bm{z}}_k^{rc}$ is the $i$th column of $\mathcal{Z}_k^{rc}$, the anomaly matrix of the temporary ensemble can be written as
\begin{equation}\label{eq:Astar_expand}
    A_k^\star
    =
    A_k^f
    -
    K_k \mathcal{Z}_k^{rc}
    +
    K_k \mathcal{E}_k.
\end{equation}
Substituting \eqref{eq:Astar_expand} into \eqref{eqn_EnKF} gives
\footnotesize\begin{align}
    \widetilde{P}_k^{a,rc}
    &=
    \frac{1}{N-1}
    \left(
        A_k^f - K_k \mathcal{Z}_k^{rc} + K_k \mathcal{E}_k
    \right)
    \left(
        A_k^f - K_k \mathcal{Z}_k^{rc} + K_k \mathcal{E}_k
    \right)^\top.
\end{align}\normalsize
Taking expectation with respect to $\bm{\eta}_{k,1},\ldots,\bm{\eta}_{k,N}$, and noting that $A_k^f$ and $\mathcal{Z}_k^{rc}$ are deterministic conditioned on the forecast ensemble, while
\begin{equation}
    \mathbb{E}[\mathcal{E}_k] = 0,
    \quad
    \frac{1}{N-1}\mathbb{E}[\mathcal{E}_k \mathcal{E}_k^\top] = R_k,
\end{equation}
we obtain
\footnotesize\begin{align}
    &\mathbb{E}\!\left[\widetilde{P}_k^{a,rc}\right]
    =
    \frac{1}{N-1}A_k^f (A_k^f)^\top
    -
    \frac{1}{N-1}A_k^f (\mathcal{Z}_k^{rc})^\top K_k^\top
    \nonumber\\
    &\quad
    -
    \frac{1}{N-1}K_k \mathcal{Z}_k^{rc} (A_k^f)^\top
    +
    K_k
    \left(
        \frac{1}{N-1}\mathcal{Z}_k^{rc}(\mathcal{Z}_k^{rc})^\top + R_k
    \right)
    K_k^\top.
\end{align}\normalsize
Using \eqref{predict}, \eqref{cross}, and \eqref{ETKF_target_new}, it follows that
\begin{equation}
    \mathbb{E}\!\left[\widetilde{P}_k^{a,rc}\right]
    =
    P_k^f
    +
    K_k S_k^{rc} K_k^\top
    -
    K_k (P_{xz,k}^{rc})^\top
    -
    P_{xz,k}^{rc} K_k^\top.
\end{equation}
By \eqref{hope3}, this is exactly $P_k^{a,rc}$. Therefore, \eqref{eq:cov_expectation} holds.
\end{proof}

\subsubsection{Improvement 1 for ETKF}
The key idea of ETKF is to find a transformation matrix $T_k^{rc}$ such that
the updated anomalies
\begin{equation}\label{updated_A}
    \widetilde{A}_k^a = A_k^f T_k^{rc},
\end{equation}
have a sample covariance equal to $P_k^{a,rc}$ defined in \eqref{hope3}. The candidate analysis ensemble is then
\begin{equation}
    \widetilde{X}_k^a = \bar{\bm{x}}_k^a \bm{1}^\top + \widetilde{A}_k^a.
\end{equation}
One choice of $T_k^{rc}$ is the symmetric positive-semidefinite square root
\footnotesize\begin{equation}\label{eq:mkrc}
    T_k^{rc}
    =
    \Big(
    I
    -
    (\mathcal{Z}_k^f)^\top B_k^f \mathcal{Z}_k^{rc}
    -
    (\mathcal{Z}_k^{rc})^\top B_k^f \mathcal{Z}_k^f
    +
    (\mathcal{Z}_k^f)^\top B_k^f \Gamma_k^{rc} B_k^f \mathcal{Z}_k^f
    \Big)^{1/2},
\end{equation}\normalsize
where
\begin{equation}
    B_k^f
    =
    \Big(
    \mathcal{Z}_k^f(\mathcal{Z}_k^f)^\top
    +
    (N-1)R_k
    \Big)^{-1},
\end{equation}
and
\begin{equation}
    \Gamma_k^{rc}
    =
    \mathcal{Z}_k^{rc}(\mathcal{Z}_k^{rc})^\top
    +
    (N-1)R_k.
\end{equation}
Indeed, the matrix inside the square root is positive semidefinite, because
\begin{align}
&\;
I
-
(\mathcal{Z}_k^f)^\top B_k^f \mathcal{Z}_k^{rc}
-
(\mathcal{Z}_k^{rc})^\top B_k^f \mathcal{Z}_k^f
+
(\mathcal{Z}_k^f)^\top B_k^f \Gamma_k^{rc} B_k^f \mathcal{Z}_k^f
\notag\\
&\quad=
\bigl(I-(\mathcal{Z}_k^{rc})^\top B_k^f \mathcal{Z}_k^f\bigr)^\top
\bigl(I-(\mathcal{Z}_k^{rc})^\top B_k^f \mathcal{Z}_k^f\bigr)
\notag\\
&\quad+
(N-1)(\mathcal{Z}_k^f)^\top B_k^f R_k B_k^f \mathcal{Z}_k^f
\succeq 0.
\end{align}
Following Algorithm~\ref{new_algorithm}, one may accept the update and set $X_k^a=\widetilde{X}_k^a$ only if
\begin{equation}
    \operatorname{tr}(P_k^{a,rc})
    \le
    \operatorname{tr}(P_k^f),
\end{equation}
otherwise the filter reverts to the forecast ensemble and sets
\begin{equation}
    X_k^a = X_k^f.
\end{equation}
This update rule guarantees that \eqref{hope3} is exactly satisfied, as stated in the following proposition.

\begin{proposition}\label{p2}
Assume the recalibrated ETKF update is accepted, and let the updated anomalies be defined by \eqref{updated_A}, where $T_k^{rc}$ is given by \eqref{eq:mkrc}. Then, the realized sample covariance of the candidate analysis ensemble satisfies
\begin{equation}\label{proof2}
    \frac{1}{N-1}\widetilde{A}_k^a(\widetilde{A}_k^a)^\top = P_k^{a,rc}.
\end{equation}
\end{proposition}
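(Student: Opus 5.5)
The plan is a direct computation. Write $M_k$ for the matrix inside the square root in \eqref{eq:mkrc}. By the positive-semidefiniteness identity shown just before the proposition, $M_k$ is symmetric PSD, so $T_k^{rc}=M_k^{1/2}$ is symmetric and satisfies $T_k^{rc}(T_k^{rc})^\top = M_k$. Hence
\begin{equation*}
\frac{1}{N-1}\widetilde{A}_k^a(\widetilde{A}_k^a)^\top = \frac{1}{N-1}A_k^f M_k (A_k^f)^\top ,
\end{equation*}
and it remains to expand this term by term and match it with \eqref{hope3}.

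First, rewrite the gain in ensemble form. Since $P_{xz,k}^f=\frac{1}{N-1}A_k^f(\mathcal{Z}_k^f)^\top$ and $S_k^f=\frac{1}{N-1}\bigl(\mathcal{Z}_k^f(\mathcal{Z}_k^f)^\top+(N-1)R_k\bigr)=\frac{1}{N-1}(B_k^f)^{-1}$, we get
\begin{equation*}
K_k = A_k^f(\mathcal{Z}_k^f)^\top B_k^f .
\end{equation*}
Substituting this into $\frac{1}{N-1}A_k^f M_k (A_k^f)^\top$ gives four terms:
\begin{itemize}
\item The identity gives $\frac{1}{N-1}A_k^f(A_k^f)^\top = P_k^f$ by \eqref{predict}.
\item The term $-\frac{1}{N-1}A_k^f(\mathcal{Z}_k^f)^\top B_k^f\mathcal{Z}_k^{rc}(A_k^f)^\top$ equals $-K_k\cdot\frac{1}{N-1}\mathcal{Z}_k^{rc}(A_k^f)^\top = -K_k(P_{xz,k}^{rc})^\top$ by \eqref{cross}.
\item Its transpose term gives $-P_{xz,k}^{rc}K_k^\top$, using the symmetry of $B_k^f$.
\item The last term is $\frac{1}{N-1}K_k\Gamma_k^{rc}K_k^\top$. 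Since $\frac{1}{N-1}\Gamma_k^{rc}=S_k^{rc}$ by \eqref{ETKF_target_new}, this equals $K_kS_k^{rc}K_k^\top$.
\end{itemize}
Summing the four terms reproduces \eqref{hope3} exactly.

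There is no real obstacle. The only points needing care are that $B_k^f$ is symmetric and invertible, which holds when $R_k\succ0$, and that the symmetric PSD square root is used, so that $TT^\top=M_k$. Acceptance of the update only serves to make $\widetilde{A}_k^a$ the anomaly matrix actually used; the identity itself holds regardless.
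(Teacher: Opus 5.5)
Your proposal is correct and follows the same route as the paper. Both arguments use the symmetric square root to get $T_k^{rc}(T_k^{rc})^\top$ equal to the matrix inside \eqref{eq:mkrc}, rewrite $K_k=A_k^f(\mathcal{Z}_k^f)^\top B_k^f$ and $\Gamma_k^{rc}=(N-1)S_k^{rc}$, and match the four expanded terms against \eqref{hope3}; you also make explicit the symmetry of $B_k^f$ and the need for $R_k\succ 0$, which the paper uses only implicitly.
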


\begin{proof}
By \eqref{updated_A},
\begin{equation}
    \frac{1}{N-1}\widetilde{A}_k^a(\widetilde{A}_k^a)^\top
    =
    \frac{1}{N-1}A_k^f T_k^{rc}(T_k^{rc})^\top (A_k^f)^\top.
\end{equation}
Since $T_k^{rc}$ is the symmetric square root in \eqref{eq:mkrc},
\begin{align}
\frac{1}{N-1}\widetilde{A}_k^a(\widetilde{A}_k^a)^\top
&=
\frac{1}{N-1}A_k^f(A_k^f)^\top
\nonumber\\
&\quad
-
\frac{1}{N-1}A_k^f(\mathcal{Z}_k^f)^\top B_k^f \mathcal{Z}_k^{rc}(A_k^f)^\top
\nonumber\\
&\quad
-
\frac{1}{N-1}A_k^f(\mathcal{Z}_k^{rc})^\top B_k^f \mathcal{Z}_k^f (A_k^f)^\top
\nonumber\\
&\quad
+
\frac{1}{N-1}A_k^f(\mathcal{Z}_k^f)^\top
B_k^f \Gamma_k^{rc} B_k^f \mathcal{Z}_k^f (A_k^f)^\top.
\label{eq:p1_expand}
\end{align}
Now recall that
\begin{equation}
    P_k^f = \frac{1}{N-1}A_k^f(A_k^f)^\top,
    \quad
    P_{xz,k}^{rc} = \frac{1}{N-1}A_k^f(\mathcal{Z}_k^{rc})^\top,
\end{equation}
and
\small\begin{equation}
    K_k
    =
    A_k^f(\mathcal{Z}_k^f)^\top
    \Bigl(
    \mathcal{Z}_k^f(\mathcal{Z}_k^f)^\top + (N-1)R_k
    \Bigr)^{-1}
    =
    A_k^f(\mathcal{Z}_k^f)^\top B_k^f.
\end{equation}\normalsize
Moreover, from the definition of $\Gamma_k^{rc}$,
\begin{equation}
    \Gamma_k^{rc}
    =
    \mathcal{Z}_k^{rc}(\mathcal{Z}_k^{rc})^\top + (N-1)R_k
    =
    (N-1)S_k^{rc}.
\end{equation}
Therefore, the second term on the right-hand side of \eqref{eq:p1_expand} becomes
\begin{equation}
    \frac{1}{N-1}A_k^f(\mathcal{Z}_k^f)^\top B_k^f \mathcal{Z}_k^{rc}(A_k^f)^\top
    =
    K_k (P_{xz,k}^{rc})^\top,
\end{equation}
the third term becomes
\begin{equation}
    \frac{1}{N-1}A_k^f(\mathcal{Z}_k^{rc})^\top B_k^f \mathcal{Z}_k^f (A_k^f)^\top
    =
    P_{xz,k}^{rc} K_k^\top,
\end{equation}
and the fourth term becomes
\begin{equation}
    \frac{1}{N-1}A_k^f(\mathcal{Z}_k^f)^\top
    B_k^f \Gamma_k^{rc} B_k^f \mathcal{Z}_k^f (A_k^f)^\top
    =
    K_k S_k^{rc} K_k^\top.
\end{equation}
Substituting these identities into \eqref{eq:p1_expand} yields
\small\begin{equation}
    \frac{1}{N-1}\widetilde{A}_k^a(\widetilde{A}_k^a)^\top
    =
    P_k^f
    +
    K_k S_k^{rc} K_k^\top
    -
    K_k (P_{xz,k}^{rc})^\top
    -
    P_{xz,k}^{rc} K_k^\top.
\end{equation}\normalsize
By \eqref{hope3}, this is exactly $P_k^{a,rc}$. Therefore, \eqref{proof2} holds.
\end{proof}

\subsection{Improvement 2: Adaptive Covariance Compensation}
\label{subsec:improvement2}

As presented in \eqref{inflation_eq}, the conventional EnKF typically applies multiplicative covariance inflation to the forecast anomalies. Although this often improves performance empirically, its mathematical justification is much weaker for linear measurements. Indeed, if
\begin{equation}
    h_k(\bm{x}_k) = H_k \bm{x}_k,
\end{equation}
then
\begin{equation}
    \bar{\bm{z}}_k^f = H_k \bar{\bm{x}}_k^f,
    \quad
    \mathcal{Z}_k^f = H_k A_k^f,
\end{equation}
and hence
\small\begin{equation}
    P_{xz,k}^f = P_k^f H_k^\top,
    \quad
    P_{zz,k}^f = H_k P_k^f H_k^\top,
    \quad
    S_k^f = H_k P_k^f H_k^\top + R_k.
\end{equation}\normalsize
Therefore, if the forecast mean and covariance are exact, then the corresponding first and second moments in measurement space are also exact. Thus, linear measurements introduce no additional approximation at the level of first and second moments. Although nonlinear state dynamics may still render the forecast distribution non-Gaussian, there is no general result showing that this mismatch systematically causes the forecast covariance to underestimate the true state covariance.

This point is further supported by the prediction-only setting. If the initial ensemble members and process-noise samples are i.i.d. draws from their true distributions, then the pre-inflation forecast covariance
\begin{equation}
    P_k^{f_0}
    =
    \frac{1}{N-1}A_k^{f_0}(A_k^{f_0})^\top
\end{equation}
satisfies
\begin{equation}
    \mathbb{E}[P_k^{f_0}] = \operatorname{Cov}(\bm{x}_k).
\end{equation}
After multiplicative inflation in \eqref{inflation_eq}, we obtain
\begin{equation}
    P_k^f = \rho P_k^{f_0},
    \quad 
    \mathbb{E}[P_k^f] = \rho \, \operatorname{Cov}(\bm{x}_k), \quad\rho \ge 1,
\end{equation}
which is biased whenever \(\rho>1\). Thus, for linear measurements, covariance inflation is better viewed as a pragmatic robustness device than as a mathematically necessary correction. By contrast, when \(h_k\) is nonlinear, the matrices \(P_{xz,k}^f\) and \(P_{zz,k}^f\) depend on higher-order properties of the forecast distribution and cannot, in general, be recovered exactly from the first two moments alone. In that case, Theorem 1 in \cite{frame} shows that even unbiased approximations of the innovation covariance and cross-covariance can still cause the conventional covariance update to underestimate the actual posterior covariance on average. This observation motivates the second improvement proposed in this paper -- covariance compensation -- which enlarges the innovation covariance for nonlinear measurements while vanishing in the linear case.

Improvement~2 removes the covariance inflation step (by fixing $\rho=1$) and augments the innovation covariance by a positive semidefinite compensation term that depends on the mismatch between $h_k(\bar{\bm{x}}_k^f)$ and $\bar{\bm{z}}_k^f$. Specifically, the compensated forecast innovation covariance is
\begin{equation}
    \widetilde{S}_k^f
    =
    \frac{1}{N-1}\mathcal{Z}_k^f(\mathcal{Z}_k^f)^\top
    +
    \beta_{k-1}\bm{d}_k^f(\bm{d}_k^f)^\top
    +
    R_k,
\end{equation}
where $\beta_{k-1}$ is a scalar and $\bm{d}_k^f$ is the forecast nonlinearity mismatch defined by
\begin{equation}\label{dkf}
    \bm{d}_k^f
    =
    h_k(\bar{\bm{x}}_k^f) - \bar{\bm{z}}_k^f.
\end{equation}
The Kalman gain is then computed as
\begin{equation}
    K_k = P_{xz,k}^f \bigl(\widetilde{S}_k^f\bigr)^{-1}.
\end{equation}

When Improvements~1 and~2 are used together, EnKF recalibration uses the analogous recentered (recalibrated) mismatch
\begin{equation}
    \bm{d}_k^{rc}
    =
    h_k(\bar{\bm{x}}_k^a) - \bar{\bm{z}}_k^{rc},
\end{equation}
and the recalibrated innovation covariance becomes
\begin{equation}
    \widetilde{S}_k^{rc}
    =
    \frac{1}{N-1}\mathcal{Z}_k^{rc}(\mathcal{Z}_k^{rc})^\top
    +
    \beta_{k-1}\bm{d}_k^{rc}(\bm{d}_k^{rc})^\top
    +
    R_k.
\end{equation}
Accordingly, the recalibrated posterior covariance target becomes
\begin{equation}\label{hope3_comp}
    P_k^{a,rc}
    =
    P_k^f
    +
    K_k \widetilde{S}_k^{rc} K_k^\top
    -
    K_k (P_{xz,k}^{rc})^\top
    -
    P_{xz,k}^{rc} K_k^\top.
\end{equation}
In this case, the anomaly update should also be modified so that the stochastic EnKF preserves \eqref{hope3_comp} in expectation and the ETKF matches \eqref{hope3_comp} exactly. For the stochastic EnKF, the perturbed observations should be drawn from
\begin{equation}\label{rceta}
    \bm{\eta}_{k,i}^{rc}
    \sim
    \mathcal{N}\!\left(
        \bm{0},
        R_k + \beta_{k-1}\bm{d}_k^{rc}(\bm{d}_k^{rc})^\top
    \right),
\end{equation}
rather than from $\mathcal{N}(\bm{0},R_k)$, and the memberwise update becomes
\begin{equation}\label{eq:i2_stoch_temp}
    \bm{x}_{k,i}^{\star}
    =
    \bm{x}_{k,i}^{rc}
    +
    K_k\!\left(
        \bm{z}_k + \bm{\eta}_{k,i}^{rc} - \bm{z}_{k,i}^{rc}
    \right).
\end{equation}
By the same argument as in Proposition~\ref{prop:i1_stoch}, the compensated stochastic-EnKF update preserves the recalibrated target covariance in expectation after replacing $R_k$ by $R_k+\beta_{k-1}\bm{d}_k^{rc}(\bm{d}_k^{rc})^\top$. In particular, if the compensated update is accepted, then
\begin{equation}
    \mathbb{E}\!\left[\widetilde{P}_k^{a,rc}\right]
    =
    P_k^{a,rc},
\end{equation}
where $P_k^{a,rc}$ is defined in \eqref{hope3_comp}.

For the ETKF, the deterministic transform keeps the same form as \eqref{eq:mkrc}, but
\begin{equation}\label{Bnew}
    B_k^f
    =
    \Bigl(
        \mathcal{Z}_k^f(\mathcal{Z}_k^f)^\top
        +
        (N-1)\bigl(
            \beta_{k-1}\bm{d}_k^f(\bm{d}_k^f)^\top + R_k
        \bigr)
    \Bigr)^{-1},
\end{equation}
and
\begin{equation}\label{Gammanew}
    \Gamma_k^{rc}
    =
    \mathcal{Z}_k^{rc}(\mathcal{Z}_k^{rc})^\top
    +
    (N-1)\bigl(
        \beta_{k-1}\bm{d}_k^{rc}(\bm{d}_k^{rc})^\top + R_k
    \bigr)
\end{equation}
should replace their uncompensated counterparts. Likewise, Proposition~\ref{p2} extends directly to the compensated ETKF after replacing the uncompensated matrices $B_k^f$ and $\Gamma_k^{rc}$ by \eqref{Bnew} and \eqref{Gammanew}. Hence, if the compensated ETKF update is accepted, the realized sample covariance of the candidate analysis ensemble exactly matches the target covariance \eqref{hope3_comp}.

Mathematically, the covariance compensation proposed here is closely related to the second-order derivatives of the measurement function. Indeed, when each measurement component $h_{k,i}$ is quadratic with Hessian matrix $\mathcal{H}_i$, $i=1,\ldots,m$, the mismatch terms admit a closed-form expression:
\begin{equation}
    d_{k,i}^f
    =
    -\frac{N-1}{2N}\operatorname{tr}(\mathcal{H}_i P_k^f).
\end{equation}
Consequently,
\small\begin{equation}
    \bm{d}_k^f(\bm{d}_k^f)^\top
    =
    \frac{(N-1)^2}{4N^2}
    \bigl[\operatorname{tr}(\mathcal{H}_iP_k^f)\operatorname{tr}(\mathcal{H}_jP_k^f)\bigr]_{i,j}.
\end{equation}\normalsize
Therefore, the proposed covariance compensation scales automatically with the nonlinearity of the measurement function and becomes exactly zero when the measurements are linear.

The remaining task is to determine the scalar $\beta_{k-1}$, which controls the magnitude of the covariance compensation. Because the compensation is added directly to the innovation covariance, it is natural to tune $\beta_k$ online according to the discrepancy between the predicted and realized innovations. To this end, we use the normalized innovation squared (NIS),
\begin{equation}\label{NIS}
    \epsilon_k
    =
    \bigl(\bm{z}_k-\bar{\bm{z}}_k^f\bigr)^\top
    \bigl(\widetilde{S}_k^f\bigr)^{-1}
    \bigl(\bm{z}_k-\bar{\bm{z}}_k^f\bigr).
\end{equation}
To reduce sensitivity to instantaneous noise, we smooth the NIS by an exponentially weighted moving average:
\begin{equation}
    \bar{\epsilon}_k
    =
    \lambda \bar{\epsilon}_{k-1}
    +
    (1-\lambda)\epsilon_k,
\end{equation}
where $\lambda \in (0,1)$ is a user-chosen smoothing factor. We then update the compensation magnitude according to
\begin{equation}
    \beta_k
    =
    \max
    \Bigl(
        \beta_{k-1} + \mu(\bar{\epsilon}_k - m),\,
        0
    \Bigr),
    \label{eq:beta_update}
\end{equation}
where $\mu > 0$ is the adaptation gain, and $m$ is the measurement dimension. The projection onto $[0,\infty)$ ensures $\beta_k \ge 0$, so that the compensation term $\beta_k \bm{d}_k^f (\bm{d}_k^f)^\top$ remains positive semidefinite. The term $\bar{\epsilon}_k - m$ appears in \eqref{eq:beta_update} because, when the innovation statistics are consistent with the assumed covariance, the NIS satisfies
\begin{equation}
    \mathbb{E}[\epsilon_k] = m.
\end{equation}
Therefore, if the observed innovations are systematically larger than predicted, then $\bar{\epsilon}_k > m$ and $\beta_k$ increases, enlarging the innovation covariance. Conversely, if the observed innovations are systematically smaller than predicted, then $\bar{\epsilon}_k < m$ and $\beta_k$ decreases. Thus, the feedback law adaptively adjusts $\beta_k$ so that the predicted innovation covariance better matches the empirical innovation statistics.
 
\subsection{The proposed CAR-EnKF framework}
\label{subsec:universal_framework}
\begin{algorithm}[htbp]
\footnotesize
\caption{The proposed CAR-EnKF framework}
\label{alg:proposed_enkf}
\begin{algorithmic}[1]
    \Statex \textbf{Input:} process-noise covariance $Q_{k-1}$, measurement-noise covariance $R_k$, Ensemble size $N$, state-transition function $f_{k-1}$, measurement function $h_k$, input $\bm{u}_{k-1}$, measurement $\bm{z}_k$, compensation parameter $\beta_{k-1}$, EnKF variant $\in \{$stochastic EnKF, ETKF$\}$

    \Statex \textbf{Initialization:}
    \State Initialize the analysis ensemble $X_0^a$

    \For{$k=1,2,\dots$}

        \Statex \hspace{1.2em} \textbf{Predict:}
        \State Propagate $X_{k-1}^a$ through $f_{k-1}$ with $\bm{u}_{k-1}$ and $Q_{k-1}$ to obtain 
        \Statex \hspace{1.2em} the forecast ensemble $X_k^f$
        \State Compute the forecast mean $\bar{\bm{x}}_k^f$, anomaly matrix $A_k^f$, and covariance
        \[
            P_k^f = \frac{1}{N-1}A_k^f(A_k^f)^\top
        \]

        \Statex \hspace{1.2em} \textbf{Update:}
        \State Compute $\bm{z}_{k,i}^f = h_k(\bm{x}_{k,i}^f)$, $i=1,\dots,N$
        \State Form $\bar{\bm{z}}_k^f$, $\mathcal{Z}_k^f$, and $P_{xz,k}^f = \frac{1}{N-1}A_k^f(\mathcal{Z}_k^f)^\top$
        \State Compute the forecast nonlinearity mismatch
        \[
            \bm{d}_k^f = h_k(\bar{\bm{x}}_k^f) - \bar{\bm{z}}_k^f
        \]
        \State Form the compensated innovation covariance
        \[
            \widetilde{S}_k^f
            =
            \frac{1}{N-1}\mathcal{Z}_k^f(\mathcal{Z}_k^f)^\top
            +
            \beta_{k-1}\bm{d}_k^f(\bm{d}_k^f)^\top
            +
            R_k
        \]
        \State Compute the Kalman gain $K_k = P_{xz,k}^f(\widetilde{S}_k^f)^{-1}$
        \State Update the analysis mean $\bar{\bm{x}}_k^a
            =
            \bar{\bm{x}}_k^f
            +
            K_k(\bm{z}_k-\bar{\bm{z}}_k^f)$
        \Statex \hspace{1.2em} \textbf{Recalibrate:}
        \State Form the recentered ensemble $X_k^{rc} = \bar{\bm{x}}_k^a\bm{1}^\top + A_k^f$
        \State Compute $\bm{z}_{k,i}^{rc} = h_k(\bm{x}_{k,i}^{rc})$, $i=1,\dots,N$
        \State Form $\bar{\bm{z}}_k^{rc}$, $\mathcal{Z}_k^{rc}$, and $P_{xz,k}^{rc}
            =
            \frac{1}{N-1}A_k^f(\mathcal{Z}_k^{rc})^\top$
        \State Compute the recalibrated nonlinearity mismatch
        \[
            \bm{d}_k^{rc} = h_k(\bar{\bm{x}}_k^a) - \bar{\bm{z}}_k^{rc}
        \]
        \State Form the recalibrated compensated innovation covariance
        \[
            \widetilde{S}_k^{rc}
            =
            \frac{1}{N-1}\mathcal{Z}_k^{rc}(\mathcal{Z}_k^{rc})^\top
            +
            \beta_{k-1}\bm{d}_k^{rc}(\bm{d}_k^{rc})^\top
            +
            R_k
        \]
        \State Form the recalibrated posterior covariance target
        \[
            P_k^{a,rc}
            =
            P_k^f
            +
            K_k\widetilde{S}_k^{rc}K_k^\top
            -
            K_k(P_{xz,k}^{rc})^\top
            -
            P_{xz,k}^{rc}K_k^\top
        \]
        \State Form the candidate analysis ensemble $\widetilde{X}_k^a$
        \Statex \hspace{1.2em} -- For stochastic EnKF, use the perturbed-observation update
        \Statex \hspace{2em} defined by \eqref{rceta}, \eqref{eq:i2_stoch_temp}, and \eqref{Xk}--\eqref{translate}
        \Statex \hspace{1.2em} -- For ETKF, use the deterministic anomaly transform defined by
        \Statex \hspace{2em} \eqref{eq:mkrc}, \eqref{Bnew}, and \eqref{Gammanew}.
        \Statex \hspace{1.2em} \textbf{Back out:}
        \If{$\operatorname{tr}(P_k^{a,rc}) \le \operatorname{tr}(P_k^f)$}
            \State Accept the update and set $X_k^a = \widetilde{X}_k^a$
        \Else
            \State Reject the update and set $X_k^a = X_k^f$
        \EndIf

        \Statex \hspace{1.2em} \textbf{Optional adaptive compensation update:}
        \State Update $\beta_k$ using \eqref{NIS}--\eqref{eq:beta_update}, or set $\beta_k=\beta_{k-1}$ if a fixed 
        \Statex \hspace{1.2em} compensation is used
    \EndFor
\end{algorithmic}
\normalsize
\end{algorithm}
Algorithm~\ref{alg:proposed_enkf} summarizes CAR-EnKF. Most steps are shared across variants; only the anomaly update is variant-specific. Stochastic EnKF uses \eqref{rceta}, \eqref{eq:i2_stoch_temp}, and \eqref{Xk}--\eqref{translate}, whereas ETKF uses \eqref{eq:mkrc}, \eqref{Bnew}, and \eqref{Gammanew}.

\section{NUMERICAL RESULTS}\label{sec:exp}
We evaluate CAR-EnKF with the stochastic EnKF and ETKF on two nonlinear-observation benchmarks: feature-based simultaneous localization and mapping (SLAM) and Lorenz--96. For each variant and benchmark, we compare (i) the conventional EnKF, (ii) the EnKF with Improvement~1 only, and (iii) CAR-EnKF.

Unless otherwise stated, all simulations use \(1000\) Monte Carlo runs. The conventional EnKF and the EnKF with Improvement~1 use a fixed inflation factor \(\rho=1.05\), which is empirically effective and consistent with \cite{Gottwald2011VarianceLimiting}. To assess sensitivity to this choice, Figs.~\ref{slam2} and~\ref{Lorenz1} additionally show the conventional stochastic EnKF with \(\rho=1.0\) (no inflation) and \(\rho=1.1\). When Improvement~2 is enabled, inflation is disabled, and \(\beta_k\) is updated according to \eqref{NIS}--\eqref{eq:beta_update} with
\begin{equation}
    \beta_0 = 2,\quad
    \lambda = 0.9,\quad
    \mu = 0.1.
\end{equation}
Here, \(\beta_0\) sets the initial compensation magnitude, \(\lambda\) controls NIS memory (larger values smooth more), and \(\mu\) controls adaptation speed (smaller values reduce oscillation). We use the same values without retuning in all experiments.

For each benchmark, we report two experiments. In the fixed-parameter experiment, we plot the state RMSE (averaged over all Monte Carlo runs) versus time to illustrate the filters' transient and steady-state behavior. In the measurement-noise sweep, we plot the state RMSE (averaged over all Monte Carlo runs and all timesteps from step 11 onward) against a common measurement-noise scale factor to assess whether the proposed improvements remain effective across a broad range of noise levels.

\subsection{Feature-based SLAM Benchmark}
\label{subsec:exp_slam_setup}

\begin{figure}[htbp]
\centering
\includegraphics[height=4.2cm]{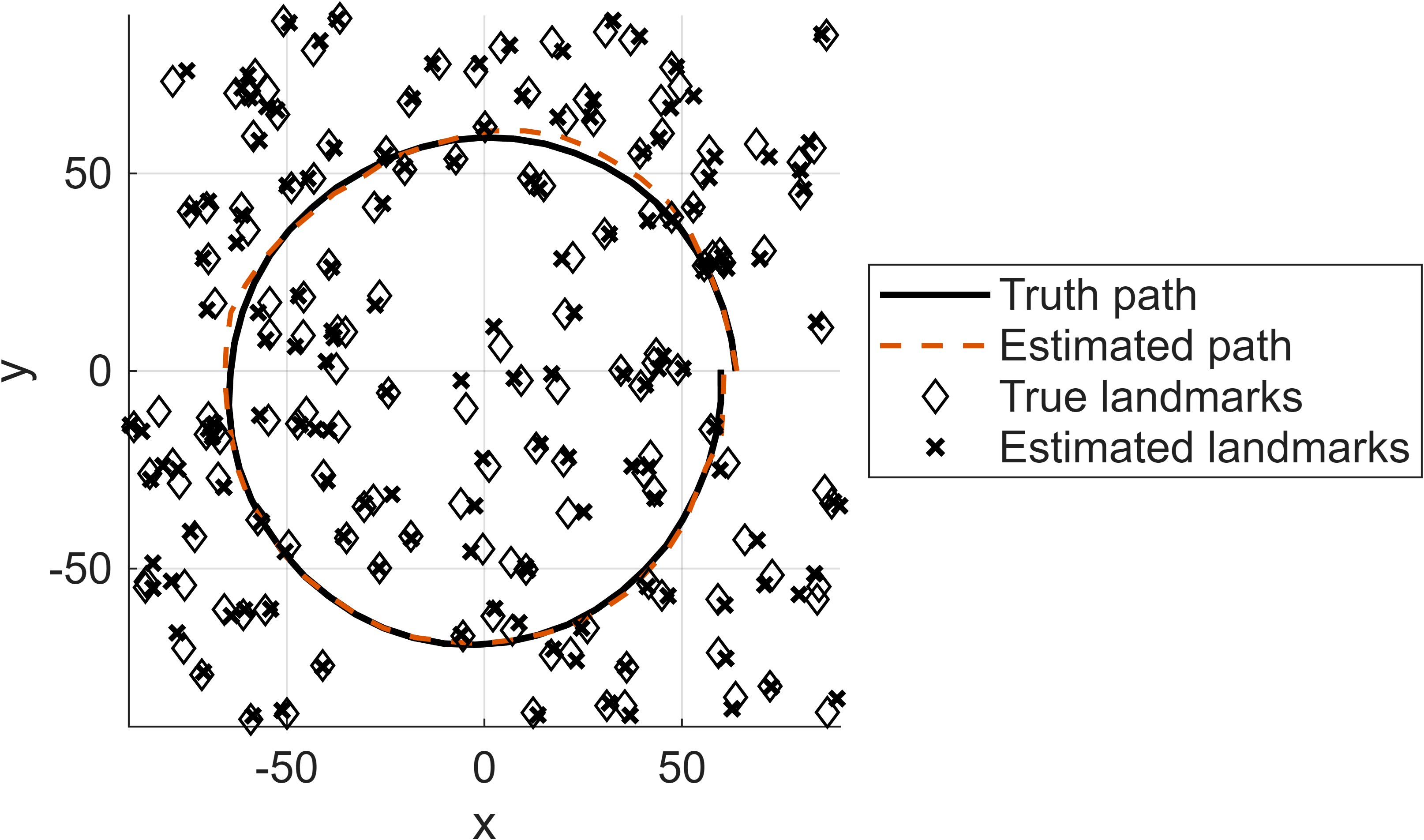}
\caption{Feature-based SLAM benchmark used in the paper.}
\label{slam1}
\end{figure}

We first evaluate the proposed framework on a \(2\)-D feature-based SLAM problem with nonlinear range--bearing measurements, following the benchmark structure of Sigges \emph{et al.}~\cite{sigges2018enkfslam}. To isolate the effect of the EnKF update rules, we assume known landmark correspondence and use a fixed-size SLAM state, thereby omitting the data-association and map-management components in \cite{sigges2018enkfslam}. The system setup is shown in Fig. \ref{slam1}. The state is
\begin{equation}
    \bm{x}_k =
    \begin{bmatrix}
        p_{x,k} & p_{y,k} & \theta_k & \ell_{1,x} & \ell_{1,y} & \cdots & \ell_{M,x} & \ell_{M,y}
    \end{bmatrix}^{\!\top},
\end{equation}
where \((p_{x,k},p_{y,k},\theta_k)\) is the robot pose and \(\ell_j=[\ell_{j,x},\,\ell_{j,y}]^\top\) is the \(j\)th landmark. The landmarks are static. Using the velocity-based motion model and range--bearing observations in \cite{sigges2018enkfslam}, we write
\begin{align}
    p_{x,k+1} &= p_{x,k} + \Delta t\, v_k \cos(\theta_k + \alpha_k) + w^x_k,\\
    p_{y,k+1} &= p_{y,k} + \Delta t\, v_k \sin(\theta_k + \alpha_k) + w^y_k,\\
    \theta_{k+1} &= \theta_k + \alpha_k + w^\theta_k,
\end{align}
where \(\bm{w}_k=[w_k^x,\,w_k^y,\,w_k^\theta]^\top\sim\mathcal{N}(\bm{0},Q)\). For each visible landmark \(j\),
\begin{equation}
    \bm{z}_k^{(j)} =
    \begin{bmatrix}
        \sqrt{(\ell_{j,x}-p_{x,k})^2 + (\ell_{j,y}-p_{y,k})^2} \\
        \mathrm{atan2}(\ell_{j,y}-p_{y,k},\,\ell_{j,x}-p_{x,k}) - \theta_k
    \end{bmatrix}
    + \bm{v}_k^{(j)},
\end{equation}
where \(\bm{v}_k^{(j)}\sim\mathcal{N}(\bm{0},R)\), and a landmark is observed only when its true range does not exceed the sensor range \(r_{\max}\).

We retain from \cite{sigges2018enkfslam} the circular-walk benchmark with \(M=150\) landmarks and \(r_{\max}=30\). The baseline noise covariances are
\small\begin{equation}
    Q=\mathrm{diag}(0.1^2,0.1^2,0.01^2), \quad
    R=\mathrm{diag}(0.1^2,0.01^2),
\end{equation}\normalsize
which correspond to \(\bm{\sigma}_w=[0.1,\,0.1,\,0.01]^\top\), \(\sigma_r=0.1\), and \(\sigma_\gamma=0.01\). The remaining parameters are \(\Delta t=1\), \(T=50\), \(N=100\), \(v_k=8\), \(\alpha_k=2\pi/T\), world half-width \(90\), robot-pose prior standard deviations \([2,\,2,\,15^\circ]\), and landmark-coordinate prior standard deviation \(8\). The landmarks are sampled uniformly in \([-90,90]^2\). In the measurement-noise sweep, we scale the baseline measurement-noise standard deviations by a common factor \(s\):
\begin{equation}
    \sigma_r = 0.1\,s,\quad
    \sigma_\gamma = 0.01\,s,\quad
    s \in 10^{-1:0.25:2}.
\end{equation}

\begin{figure}[htbp]
\centering
\includegraphics[height=3.9cm]{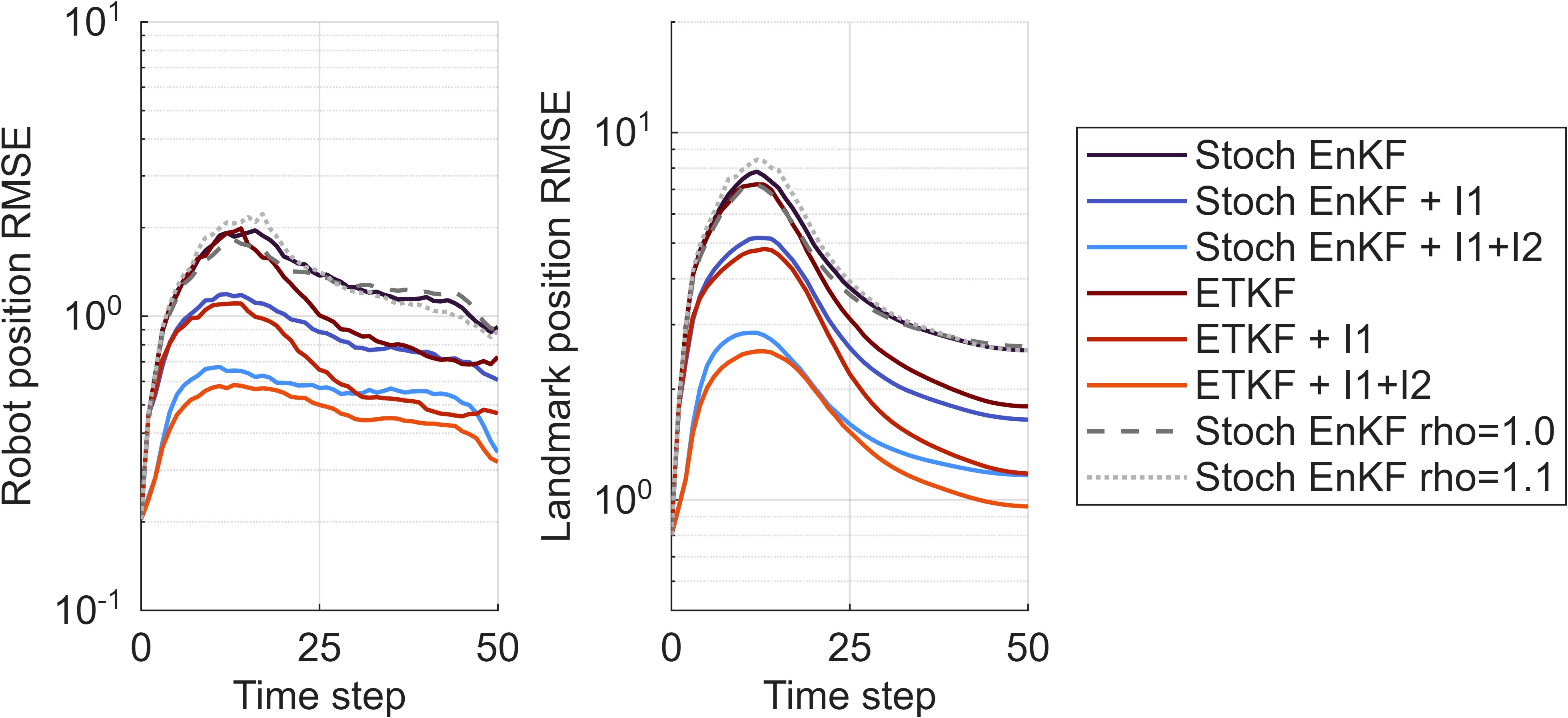}
\caption{SLAM RMSE with \(\sigma_r=0.1\) and \(\sigma_\gamma=0.01\). The conventional stochastic EnKF is shown for \(\rho=1.0,1.05,1.1\). I1 and I2 denote Improvements 1 and 2.}
\label{slam2}
\end{figure}
\begin{figure}[htbp]
\centering
\includegraphics[height=3.91cm]{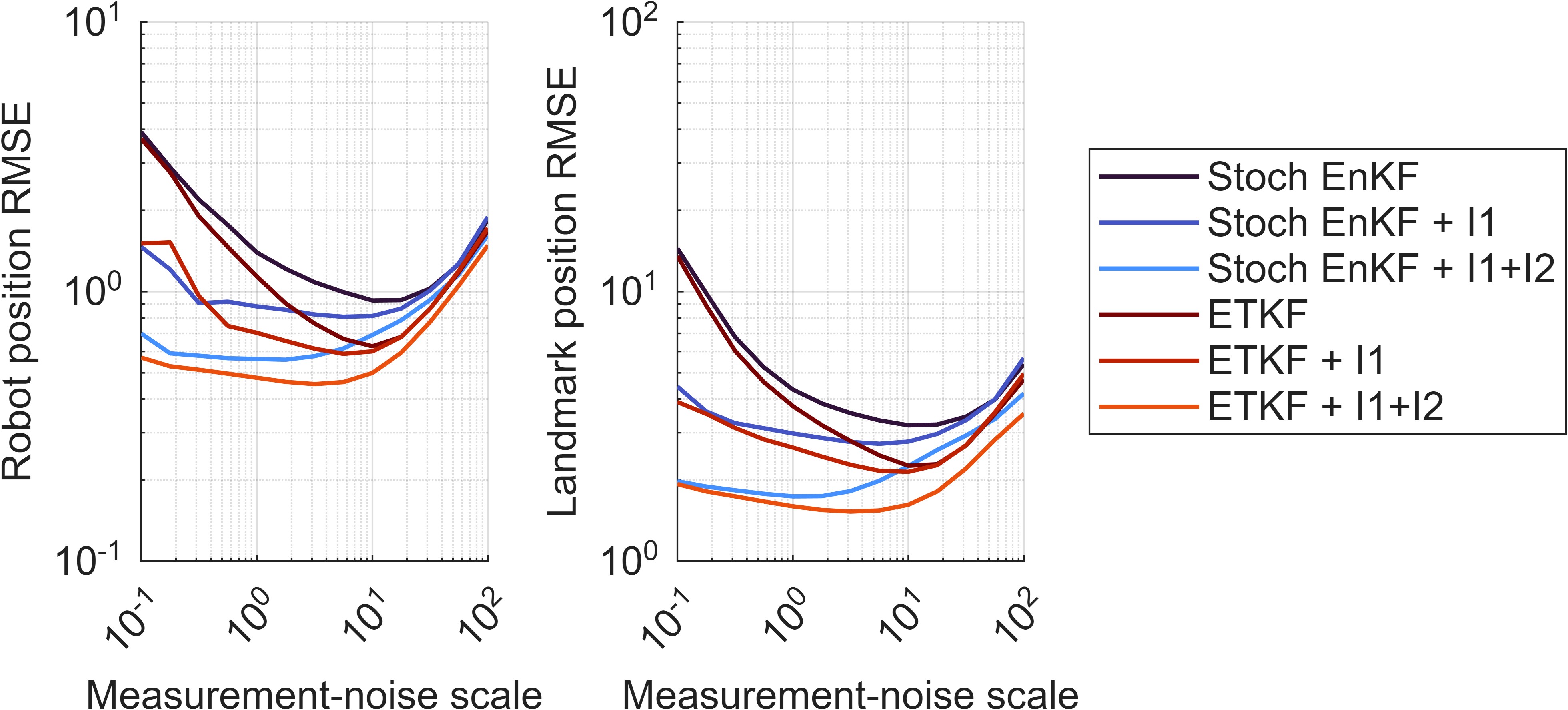}
\caption{Comparison of different EnKF algorithms on the SLAM problem under varying measurement-noise levels, with \(\sigma_r = 0.1\,s\) and \(\sigma_\gamma = 0.01\,s\), where $s$ denotes the scale factor. I1 and I2 denote the two improvements proposed in the previous section.}
\label{slam3}
\end{figure}
The fixed-parameter and measurement-noise-sweep results are shown in Figs.~\ref{slam2} and~\ref{slam3}. Relative to the default \(\rho=1.05\) baselines, Fig.~\ref{slam2} shows that Improvement~1 reduces the robot- and landmark-position RMSEs by about \(20\%\), while CAR-EnKF reduces them by about \(50\%\). The conventional stochastic-EnKF curves for \(\rho=1.0,1.05,1.1\) nearly overlap, showing that baseline performance is insensitive to \(\rho\) over the tested range and that tuning \(\rho\) does not explain the CAR-EnKF gain. Figure~\ref{slam3} further shows CAR-EnKF reductions of roughly \(10\%\) to \(90\%\), with larger gains when the measurements are more informative; the differences diminish as measurement noise increases.

It is also worth noting that the initial RMSE in Fig.~\ref{slam2} is much smaller than the prescribed prior standard deviation. This is because the prior ensemble is initialized around the true initial state, whereas the reported estimate is the ensemble mean. Therefore, if the ensemble members are sampled from a distribution with covariance \(P_0\) and the ensemble size is \(N\), then the ensemble-mean error has covariance approximately \(P_0/N\). As a result, the initial RMSE scales as \(1/\sqrt{N}\) relative to the prior standard deviation. The same phenomenon also appears in the Lorenz--96 benchmark discussed in the next subsection.

\subsection{Lorenz--96 Benchmark}
\label{subsec:exp_l96_setup}

We next evaluate the framework on Lorenz--96, a standard low-order surrogate widely used in atmospheric data assimilation and numerical weather prediction \cite{lorenz1998optimal,wu2014improving}. The state is
\begin{equation}
    \bm{x}(t)=
    \begin{bmatrix}
        x_1(t) & x_2(t) & \cdots & x_n(t)
    \end{bmatrix}^{\!\top},
    \quad n=40,
\end{equation}
with dynamics
\begin{equation}
    \frac{d x_i}{dt}
    =
    \left(x_{i+1}-x_{i-2}\right)x_{i-1} - x_i + F,
    \quad i=1,\ldots,n,
\end{equation}
where periodic indexing is used and \(F=8\). Following the standard setting in the literature, we use the 40-dimensional model with \(F=8\) and propagate the dynamics with a fourth-order Runge--Kutta integrator of step size \(\Delta t=0.05\) \cite{lorenz1998optimal,wu2014improving}.

We consider a perfect-model identical-twin experiment. The initial truth state is generated as \(F\bm{1}+\bm{\xi}\), where \(\bm{\xi}\sim\mathcal{N}(\bm{0},I)\), and is propagated for \(1000\) spin-up steps before data assimilation begins. The discrete-time forecast model is
\begin{equation}
    \bm{x}_{k+1} = \Phi_{\Delta t}(\bm{x}_k),
\end{equation}
where \(\Phi_{\Delta t}\) denotes one Runge--Kutta step. Following \cite{kurosawa2021data}, we use the nonlinear observation operator
\begin{equation}
    \bm{z}_k = h(\bm{x}_k) + \bm{v}_k,
    \quad
    h(\bm{x}_k) =
    \begin{bmatrix}
        x_{1,k}^2 & x_{3,k}^2 & \cdots & x_{39,k}^2
    \end{bmatrix}^{\!\top},
\end{equation}
where \(\bm{v}_k\sim\mathcal{N}(\bm{0},R)\) and \(R=\sigma_y^2 I\). Thus, \(m=20\) nonlinear measurements are available at each assimilation cycle.

The remaining parameters are \(N=50\), \(P_0=I\), \(120\) time steps (assimilation cycles) per Monte Carlo simulation, and baseline observation standard deviation \(\sigma_y=10^{-2}\). In the measurement-noise sweep, we vary only \(\sigma_y\) according to
\begin{equation}
    \sigma_y = 10^{-2}s,\quad
    s \in 10^{-1:0.25:2}.
\end{equation}

\begin{figure}[htbp]
\centering
\includegraphics[height=3.91cm]{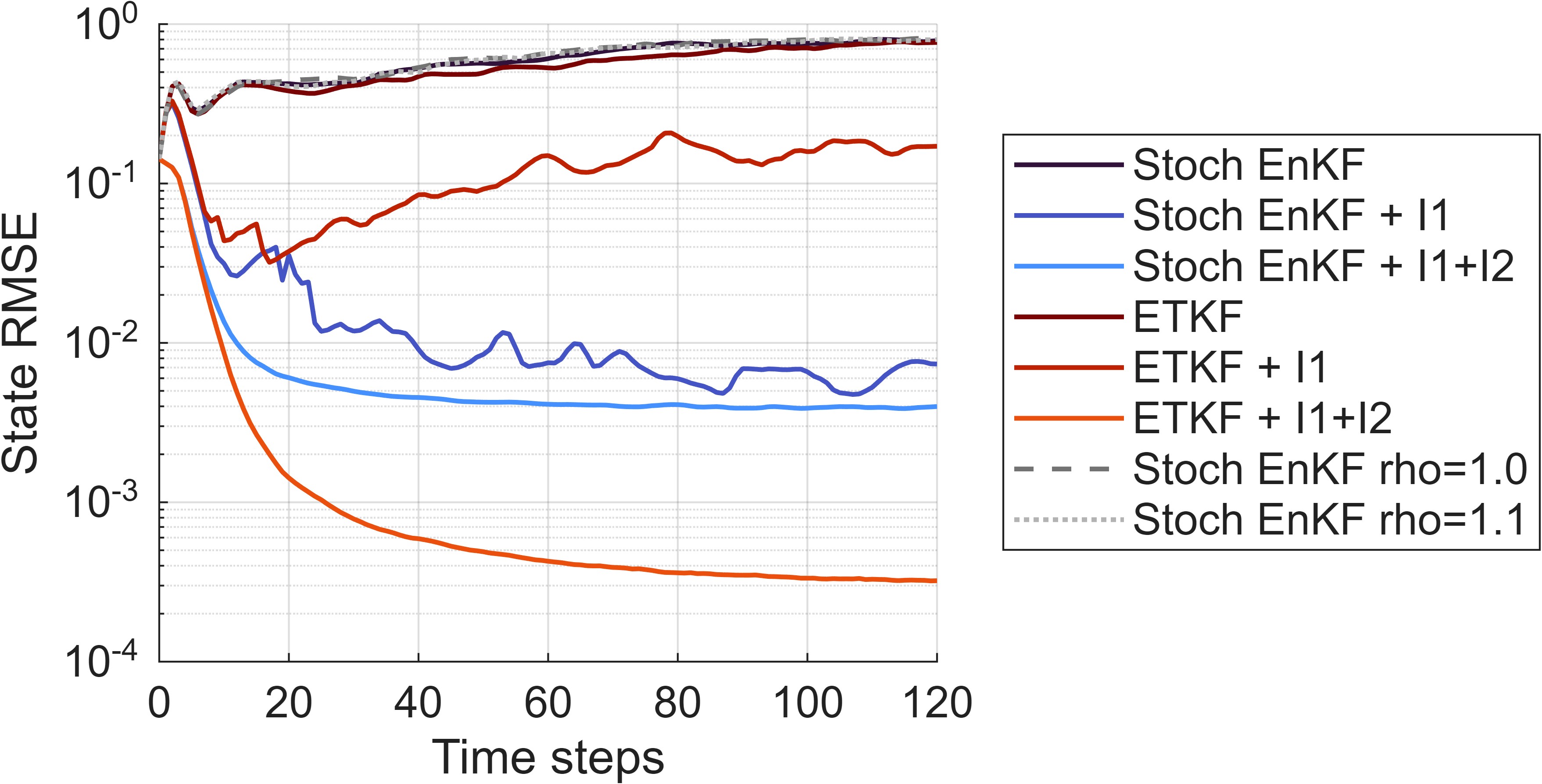}
\caption{Lorenz--96 RMSE with \(\sigma_y=10^{-2}\). The conventional stochastic EnKF is shown for \(\rho=1.0,1.05,1.1\). I1 and I2 denote Improvements 1 and 2.}
\label{Lorenz1}
\end{figure}
\begin{figure}[htbp]
\centering
\includegraphics[height=3.91cm]{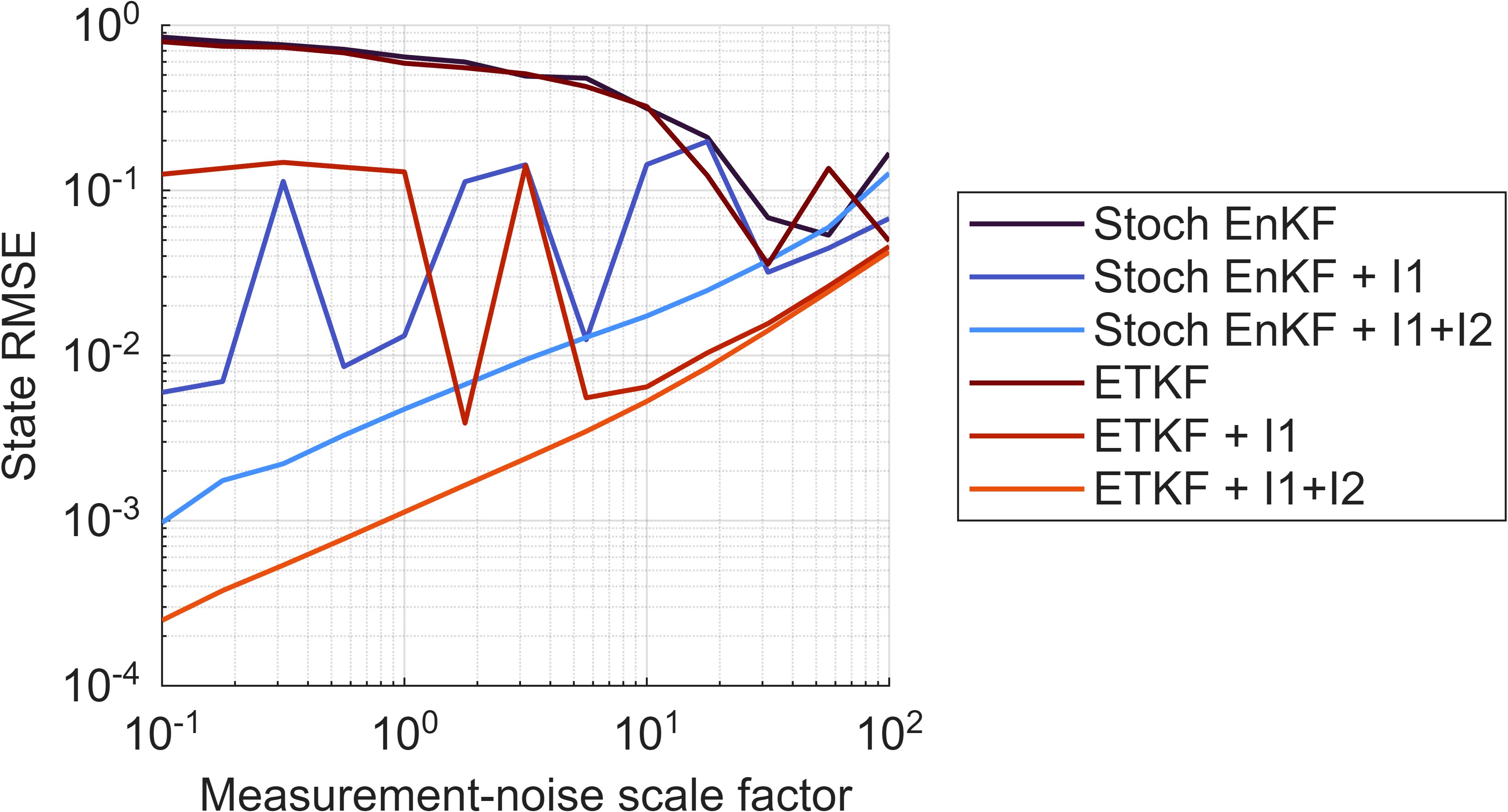}
\caption{Comparison of different EnKF algorithms on the Lorenz--96 system  under varying measurement-noise levels, with \(\sigma_y = 10^{-2}s\), where $s$ denotes the scale factor. I1 and I2 denote the two improvements proposed in the previous section.}
\label{Lorenz2}
\end{figure}

The fixed-parameter and measurement-noise-sweep results are shown in Figs.~\ref{Lorenz1} and~\ref{Lorenz2}. At \(\sigma_y=10^{-2}\), relative to the default \(\rho=1.05\) baselines, Improvement~1 reduces RMSE by about \(99\%\) for the stochastic EnKF and \(90\%\) for the ETKF, while CAR-EnKF reduces it by about \(99.5\%\) and \(99.9\%\), respectively. The conventional stochastic-EnKF curves for \(\rho=1.0,1.05,1.1\) nearly overlap, showing that changing the fixed inflation factor over this range does not resolve the estimation error. Across the sweep, CAR-EnKF reduces RMSE by roughly \(10\%\) to \(99.95\%\) relative to these tested baselines, with larger gains when the observations are more informative.

\section{CONCLUSIONS AND FUTURE WORK}\label{sec:conclusion}
This paper proposed CAR-EnKF, a covariance-adaptive and recalibrated ensemble Kalman filter framework for nonlinear state estimation. The proposed framework augments the conventional EnKF with a recalibration mechanism and a covariance-compensation term that responds directly to measurement nonlinearity. The algorithmic template is shared across EnKF variants, with only the anomaly update being variant-specific. We developed detailed realizations for the stochastic EnKF and the ETKF. Numerical results on feature-based SLAM and the Lorenz--96 system show that the proposed framework can substantially reduce state-estimation RMSE, especially when measurement error is low. Future work will compare against adaptive and iterative EnKFs and investigate consistency and stability guarantees.

\bibliographystyle{IEEEtran}
\bibliography{ref}

\end{document}